\def\sint{\ifmmode{- \!\!\!\!\!\! \int}
    \else{\hbox{$- \!\!\!\! \int \ $}}\fi}
\newtheorem{theorem}{Theorem}
\newtheorem{lemma}{Lemma}
\begin{document}

\begin{frontmatter}



\title{Rational B\'{e}zier Curves Approximated by Bernstein-Jacobi Hybrid Polynomial Curves}

\author[label1]{Mao Shi}
\address[label1]{School of Mathematics and Information Science
of Shaanxi Normal University, Xi'an 710062, China}
\cortext[cor1]{Email: shimao@snnu.edu.cn}


\begin{abstract}
    A concise necessary and sufficient condition for $C^{(u,v)}$-continuity is presented. Based on weighted least squares, an analytic solution  of rational B\'{e}zier curves approximated by   polynomial curves is obtained.  An error bound between  rational B\'{e}zier curves and  B\'{e}zier ones is given. The degree reduction algorithm for  B\'{e}zier curves is used to improve approximate error.  Finally, some examples and figures are offered to demonstrate the efficiency, simplicity, and stability of the method.
\end{abstract}
\begin{keyword}
Rational B\'{e}zier curves; Jacobi polynomials; Polynomial approximation; Weighted least-squares; Degree reduction method



\end{keyword}

\end{frontmatter}
\section{Introduction}
It is important to  research approximation of rational curves via polynomial ones in Computer-Aided Geometric Design (CAGD) because  differential and integral  operations of rational B\'{e}zier curves are  complicated. Since 1991, many approaches about this problem have been presented \cite{Piegl}\cite{Young}. Sederberg and Kakimoto \cite{Sederberg}  first presented hybrid  curves to achieve approximation of rational curves. Then Wang et al.\cite{Wang} analyzed its convergence condition and put forward a recursive algorithm. Wang and Zheng \cite{wangzheng} provided  methods to estiamte the error bounds of the moving control points.  Lee and Park \cite{Lee} transformed the problem into control points approximation of two B\'{e}zier curves based on the least-squares method. Using  dual constrained Bernstein  polynomial and Chebyshev polynomial, Lewanowicz etc.\cite{Lewanowicz}  derived  rational B\'{e}zier curves approximation by  polynomial curves with endpoints constraints in the $L_{2}$-norm and exploited their recursive properties. Shi and Deng \cite{Shi} introduced weighted least squares approximation to the field. Hu and Xu \cite{Hu} proposed a reparameterization-based method for polynomial curves  approximating rational B\'{e}zier curves.

If all   weights of rational B\'{e}zier curves equal one and the degree of  rational curves is more than the degree of the polynomial curves, the approximation of  rational B\'{e}zier by polynomial curves degenerate into a problem of degree reduction of B\'{e}zier curves.
Degree reduction of B\'{e}zier curves is also an important problem in CAGD. It's mainly used in the data compression and the data communication between diverse CAD/CAM systems \cite{Forrest, Farin}. Based on Jacobi polynomials, some papers \cite{Kim, Cheng, Ahn, Sunwoo, Rababah, Bhrawy} researched the  degree reduction of B\'{e}zier curves. The main idea of these papers is to convert Bernstein polynomials into Jacobi polynomials first, and then truncate the highest coefficients, or use the least squares to obtain the best approximation. Finally, convert  Jacobi polynomials back to  Bernstein forms.

In this paper, without considering the transformation from Bernstein polynomials to Jacobi polynomials, we directly apply  the Bernstein-Jacobi hybrid curves to achieve the polynomial approximation of  rational B\'{e}zier curves. In order to avoid  the integration of rational B\'{e}zier curves, we use  weighted least squares  to make the problem into an approximation of two B\'{e}zier curves. Since approximations with $C^{(u,v)}$- continuity at the end of points depend on parameterization of curves, we  find that errors between rational B\'{e}zier curves and B\'{e}zier curves sometimes can be reduced if the degree reduction of B\'{e}zier curves is used. In addition, although a necessary and sufficient condition for $C^{(u,v)}$-continuity is discussed in paper \cite{Hu}, this method invalid when $u$ or $v$ is greater than a degree of rational B\'{e}zier curve. In other words, it is only a sufficient condition. Therefore, we present a new necessary and sufficient condition for  $C^{(u,v)}$- continuity and a corresponding example given in section 4. Finally, an simple error bound is presented which independent of parameter.

The paper is structured as follows. Section 2 discusses some basic concepts and properties for developing our method. Section 3 brings a complete solution to the problem formulated  in the $L_2$ norm. Section 4 presents some numerical examples to verify the accuracy and effectiveness of the method.
\section{Preliminary}
\textbf{Definition  1.} A rational B\'{e}zier curve of degree $n$ can be presented as
 \begin{equation} \label{eq:1}
   {\mathbf{x}}({\mathbf{t}}) = \frac{{{\mathbf{p}}(t)}}
{{\omega (t)}} = \frac{{\sum\limits_{i = 0}^n {\omega _i {\mathbf{p}}_i B_i^n (t)} }}
{{\sum\limits_{i = 0}^n {\omega _i B_i^n (t)} }} = \frac{{\sum\limits_{i = 0}^n {{\mathbf{P}}_i B_i^n (t)} }}
{{\sum\limits_{i = 0}^n {\omega _i B_i^n (t)} }},
 \end{equation}
where ${\textbf{p}}_i$ are control points, $B_{i}^n(t)={n \choose i}t^i(1-t)^{n-i}$ are the Bernstein basis functions, $\omega_i$ are the corresponding positive weights  and $ \textbf{P}_i=\omega_i\textbf{p}_i$.

The integral of Bernstein basis functions is \cite{Farin}
  \begin{equation}\label{eq:4}
  \int_0^1 {B_i^n (t) = \frac{1}{{n + 1}}}, \ \ (i=0,...,n).
  \end{equation}

  If all the weights are equal and nonzero, a rational B\'{e}zier curve reduce to a B\'{e}zier curve
  \begin{equation} \label{eq:15}
 \textbf{y}(t)=\sum\limits_{i=0}^{n}\textbf{p}_iB_i^n(t).
 \end{equation}

Let ${q}(t) $ be a degree $m$ B\'{e}zier function with  control points$\{{q}_i\}_{i=0}^{m}$ and ${p}(t)$ be a degree $n$ B\'{e}zier function with coefficients   $\{{p}_i\}_{i=0}^{n}$, the product of ${q}(t) $ and  ${p}(t)$  is given by \cite{Farouki}
      \begin{eqnarray}\label{eq:6}
{p}(t)q(t)=\sum_{k=0}^{m+n}\sum\limits_{j=max(0,k-n)}^{min(m,k)}\frac{{m \choose j}{n \choose k-j}}{{m+n \choose k}}{p}_{k-j}{q}_{j}B_{k}^{m+n}(t).
\end{eqnarray}
For the sake of convenience, we denote the coefficients of equation (\ref{eq:6}) as following
\begin{equation} \label{eq:7}
{C}_k^{(m,q,n,p)}:=\sum\limits_{j=max(0,k-n)}^{min(m,k)}\frac{{m \choose j}{n \choose k-j}}{{m+n \choose k}}{p}_{k-j}{q}_{j}.
\end{equation}
 \textbf{Definition  2.} A degree $m$ curve
\begin{eqnarray} \label{eq:2}
{\mathbf{\tilde{ y}}}(t) = \sum\limits_{i = 0}^u {{\mathbf{q}}_i B_i^m (t)}  + t^{u+1}(1-t)^{v+1}\sum\limits_{j = 0}^{M} {{\mathbf{{\tilde{q}}}}_j J_j^{(u + 1,v + 1)} (2t - 1) + \sum\limits_{i = m - v}^m {{\mathbf{q}}_i B_i^m (t)} },\nonumber \\
\end{eqnarray}
is called a Bernstein-Jacobi hybrid curve, where $M=m-(u+v+2)$, $J_j^{(u + 1,v + 1)} (2t - 1) $ are Jacobi polynomials, ${\mathbf{q}}_i $ are constrained control points and ${\mathbf{\tilde q}}_j$ are Jacobi control points.

 The Jacobi polynomials can be represented in Bernstein form as\cite{Sunwoo}\cite{Szego}
 \begin{eqnarray*}
& J_k ^{(r + 1,s + 1)} (2t - 1)=\sum\limits_{i=0}^{k}A_iB_i^k(t), k=0,...,M,
\end{eqnarray*}
where $A_i=(-1)^{k+i}\frac{{k+r+1 \choose i}{k+s+1 \choose k-i}}{{k \choose i}}$.

By $
B_{i}^{k}\left( t \right) =\sum\limits_{j=i}^{m+i-k}\frac{{k \choose i}{m-k \choose j-i}}{{m \choose j}}{B_{j}^{m}\left( t \right)},$ we obtain
\begin{eqnarray} \label{eq:5}
&&t^{r + 1} (1 - t)^{s + 1} J_k ^{(r + 1,s + 1)} (2t - 1) \nonumber\\
 =&& \sum\limits_{i=0}^k\sum\limits_{j=i}^{M+i-k}(-1)^{k+i}\frac{{k+r+1 \choose i}{k+s+1 \choose k-i}{M-k \choose j-i}}{{m \choose r+j+1}}B_{r+j+1}^m(t),
\end{eqnarray}
which be used to convert Jacobi polynomials of degree $k$ into Bernstein polynomuals of degree $m$.

\textbf{Problem description} The polynomial approximation of rational polynomial curves based on weighted least squares  is ${\mathbf{x}}({\mathbf{t}})$ and ${\mathbf{\tilde{ y}}}(t)$  satisfying the following equation
\begin{equation}\label{eq:3}
 min \ \ \ {d} = \int_0^1 {\rho (t)\left\| {{\mathbf{x}}(t) - {\mathbf{ \tilde{y}}}(t)} \right\|^2 } dt.
\end{equation}
In particular, when the weights $\omega_i=1$ and the degree $m<n$, the problem become the degree reduction of B\'{e}zier curves.
\section{Polynomial approximation of rational curves}
In this section,  we divide the problem (\ref{eq:3}) into two parts, one for constrained control points ${\mathbf{q}}_i $ in the curve (\ref{eq:2}), the other for unconstrained control points ${\mathbf{\tilde q}}_j$.

\subsection{Constrained conditions of the approximation curve}
Although a method of solving the constrained control points based on matrix equations is provided in  paper \cite{Hu}, it can not calculate the conditions of continuity as $u> n$ and $v>n$. We give the following recursive formulas.
\begin{theorem}
  When $u+v < m-1$, given a rational B\'{e}zier curve $\emph{\textbf{x}}(t)$ and a Bernstein-Jacobi polynomial curve $\mathbf{\tilde{y}}(t)$ as in equations $(\ref{eq:1})$ and $(\ref{eq:2})$, if the equations $(\ref{eq:8})$ and $(\ref{eq:9})$ are true,
  then we call $\emph{\textbf{x}}(t)$ and $\mathbf{\tilde{y}}(t)$ satisfy $C^{(u,v)}$-continuity.
  \begin{eqnarray}\label{eq:8}
  {\mathbf{q}}_r  = && \frac{1}
{{\omega _0{m \choose r} }}\left [ {{n \choose r}\Delta ^r {\mathbf{P}}_0  - } \right.  \nonumber\\
  && \left. {\sum\limits_{i = 0}^{r - 1}  {( - 1)^{r + i} {m+n \choose r}{r \choose i}{\mathbf{C}}_i^{(m,q,n,\omega)}} - \sum\limits_{i = max(0,r-n)}^{r - 1}  {m \choose i}{n \choose r-i}\omega _{r - i} {\mathbf{q}}_i} \right ],  \nonumber\\
 &&r=0,...,u, \\
   {\mathbf{q}}_{m - s}  = && \frac{{( - 1)^s }}
{{\omega _n {m \choose s}}}\left[ {{n \choose s}\Delta ^s {\mathbf{P}}_{n - s} } \right. -    \nonumber\\
  &&  \left.{ \sum\limits_{i = 1}^s( - 1)^{s + i} {m+n \choose s}{s \choose i}{\mathbf{C}}_{m + n - s + i}^{(m,q,n,\omega)}-\sum\limits_{i = 1}^{min(s,n)} (-1)^s {m \choose s-i}{n \choose i}\omega _{n - i} {\mathbf{q}}_{m - s + i}  }   \right],   \nonumber\\ \label{eq:9}
  && s=0,...,v,
\end{eqnarray}
where $\mathbf{C}_i^{(m,q,n,\omega)} $ are given by equation (\ref{eq:7}) and ${i \choose j}:=0$ as $i<j$.
\end{theorem}
\begin{proof}
By (\ref{eq:1}) and (\ref{eq:6}), we obtain
 \begin{eqnarray} \label{eq:8b}
  \sum\limits_{i = 0}^n {\mathbf{P}}_i B_i^n (t) &=& \sum\limits_{j = 0}^m {{\mathbf{q}}_j B_j^m (t)} \sum\limits_{i = 0}^n {\omega _i B_i^n (t)}  \nonumber\\
  \  &=& \sum\limits_{k = 0}^{m + n} {{\mathbf{C}}_k^{(m,q,n,\omega)} B_k^{m + n} (t)},
\end{eqnarray}
where $\textbf{C}_k^{(m,q,n,\omega)}$ is defined by equation (\ref{eq:7}).

Differentiating both sides of  equation (\ref{eq:8b}) $r$ times with respect to $t$  and letting $t=0$ yield
\begin{equation} \label{eq:11}
{n \choose r}\Delta ^r {\mathbf{P}}_0  = {m+n \choose r}\Delta ^r {\mathbf{C}}_0^{(m,q,n,\omega)}.
\end{equation}
Here $\Delta ^r {\mathbf{C}}_0$ can be rewritten as
\begin{equation} \label{eq:12}
\Delta ^r {\mathbf{C}}_0^{(m,q,n,\omega)}  = \left( {E - I} \right)^r {\mathbf{C}}_0^{(m,q,n,\omega)}  = \sum\limits_{i = 0}^{r - 1} {( - 1)^{r + i} {r \choose i}{\mathbf{C}}_i^{(m,q,n,\omega)} }  + {\mathbf{C}}_r^{(m,q,n,\omega)}.
\end{equation}
Since $ 0 < r + s < m-1 $, we can obtain the following result by equation (\ref{eq:7})
\begin{eqnarray} \label{eq:13}
  {\mathbf{C}}_r^{(m,q,n,\omega)} &=& \sum\limits_{j = max(0,r-n)}^r {\frac{{{m \choose j}{n \choose r-j}}}
{{{m+n \choose r}}}\omega _{r - j} {\mathbf{q}}_j }  \nonumber \\
 &=& \sum\limits_{j = max(0,r-n)}^{r - 1} {\frac{{{m \choose j}{n \choose r-j}}}
{{{m+n \choose r}}}\omega _{r - j} {\mathbf{q}}_j }  + \frac{{{m \choose r}}}
{{{m+n \choose r}}}\omega _0 {\mathbf{q}}_r .
\end{eqnarray}
Then bring equations (\ref{eq:12}) and (\ref{eq:13}) into equation (\ref{eq:11}), we establish the equation (\ref{eq:8}).

Similary, we can obtain equation(\ref{eq:9}).
\end{proof}
\subsubsection{Unconstrained control points of the approximation curve}

Differentiating equation (\ref{eq:3}) with respect to $\mathbf{\tilde{q}}_k \ (k=0,..,M)$ yields the following equation:
\[
\frac{{\partial d}}
{{\partial {\mathbf{\tilde{q}}}_k }} = 2\int_0^1 {\rho (t)t^{r+1}(1-t)^{s+1}\left( {{\mathbf{x}}(t) - {\mathbf{ \tilde{y}}}(t)} \right)J_k^{( r+ 1,s + 1)} (2t - 1)dt}  = \textbf{0}.
\]
Substituting equation (\ref{eq:2}) into the above equation and letting
\[
\rho (t) =\frac{\omega (t)}{t^{r + 1}(1 - t)^{s + 1}},
\]
we obtain
\begin{eqnarray} \label{eq:14}
  &&\int_0^1 J_k^{(r + 1,s + 1)} (2t - 1){\sum\limits_{i = 0}^n {{\mathbf{P}}_i B_i^n (t)} }dt - \int_0^1 {\omega (t)} J_k^{(r + 1,s + 1)} (2t - 1)\sum\limits_{i = 0}^m {{\mathbf{Q}}_i B_i^m (t)dt} \nonumber \\
=&& \int_0^1 {t^{r + 1} (1 - t)^{s + 1} } \omega (t)J_k^{(r + 1,s + 1)} (2t - 1)\sum\limits_{j = 0}^{M} {{\mathbf{\tilde q}}_j J_j^{(r + 1,s + 1)} (2t - 1)} dt, \nonumber \\
 && k = 0,...,M
\end{eqnarray}
where
\[
{\mathbf{Q}}_i  = \left\{ {\begin{array}{lc}
   {{\mathbf{q}}_i } & {i = 0,...,r \ \ \mbox{and} \ \  i = m - s,...,m,}  \\
   0 &  \mbox{others}.\\
 \end{array} } \right.
\]
The solution of system (\ref{eq:14}) is unique, since the Bernstein polynomials are linearly independent\cite{Isaacson}.

Using equations (\ref{eq:4}), (\ref{eq:5}) and (\ref{eq:6}), the first items in the  left-hand side of equation (\ref{eq:14}) can be rewritten as
\begin{eqnarray*}
  &&\int_0^1 {\sum\limits_{i = 0}^n {{\mathbf{P}}_i B_i^n (t)}  \cdot \sum\limits_{j = 0}^k {A_j B_j^k (t)} dt}   \\
   = &&\frac{1}
{{n + k + 1}}\sum\limits_{i = 0}^{n + k} {\left( {\sum\limits_{j = \max (0,i - n)}^{\min (k,i)} {\frac{{{k \choose j}{n \choose i-j}}}
{{{k+n \choose i}}}} } A_j {\mathbf{P}}_{i - j}\right)},
 \end{eqnarray*}
and the second items are
\begin{eqnarray*}
  &&\int_0^1 {J_k^{(r + 1,s + 1)} (2t - 1)\omega (t)} \sum\limits_{i = 0}^m {{\mathbf{Q}}_i B_i^m (t)dt}  \\
   =&& \int_0^1 {\sum\limits_{j = 0}^k {A_i B_i^k (t)} \cdot} \sum\limits_{i = 0}^{m + n} {{\mathbf{ C}}_i^{(m,Q,n,\omega)} B_i^{m + n} (t)} dt  \\
   = &&\frac{1}
{{m + n + k + 1}}\sum\limits_{i = 0}^{m + n + k} {\left( {\sum\limits_{j = \max (0,i - m - n)}^{\min (k,i)} {\frac{{{k \choose j}{m+n \choose i-j}}}
{{{m+n+k \choose i}}}} A_j {\mathbf{ C}}_{i - j}^{(m,Q,n,\omega)} } \right)},  \hfill \\
\end{eqnarray*}
For the items in the left-hand side of equation (\ref{eq:14}), we have
\begin{eqnarray*}
  \int_0^1 {t^{r + 1} (1 - t)^{s + 1} \sum\limits_{i = 0}^n {\omega _i B_i^n (t)} \sum\limits_{i = 0}^k {A_i B_i^k (t)} \sum\limits_{j = 0}^{M} {{\mathbf{\tilde q}}_j J_j^{(r + 1,s + 1)} (2t - 1)} } dt \hfill \\
   =\int_0^1 {t^{r + 1} (1 - t)^{s + 1} \sum\limits_{j = 0}^{M} {{\mathbf{\tilde q}}_j J_j^{(r + 1,s + 1)} (2t - 1)} \sum\limits_{i = 0}^{n + k} {C_i^{(k,A,n,\omega)} B_i^{n + k} (t)} } dt \hfill\\
   =\sum\limits_{j = 0}^{M} {\sum\limits_{i = 0}^{n + k} {C_i^{(k,A,n,\omega)} {\mathbf{\tilde q}}_j \int_0^1 {t^{r + 1} (1 - t)^{s + 1} J_j^{(r + 1,s + 1)} (2t - 1)} } } B_i^{n + k} (t)dt \hfill\\
   = \sum\limits_{j = 0}^{M} {\sum\limits_{i = 0}^{n + k} {\sum\limits_{l = 0}^j  {\frac{{( - 1)^{j+l} {n+k \choose i}{j+r+1 \choose l}{j+s+1 \choose j-l}}}
{{(r+s+n+k+j+3){r+s+n+k+j+2 \choose r+l+i+1}}}} } }C_i^{(k,A,n,\omega)} {\mathbf{\tilde q}}_j. \ \ \ \  \hfill
\end{eqnarray*}

\subsection{Error bounds}

In this section, we  present  an error bound  for the approximation of   rational B\'{e}zier curves  by  B\'{e}zier curves.
\begin{lemma}
  If $a_k \in\mathbb{R},\  b_k>0 $ and  $c_k>0, \ 1\leq k \leq n, $ then
  \[
\frac{{\sum {a_k c_k } }}
{{\sum {b_k c_k } }} \leqslant \mathop {\max }\limits_k \frac{{a_k }}
{{b_k }}
\]
\end{lemma}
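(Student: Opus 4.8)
The plan is to reduce the inequality to a termwise comparison. First I would set $M := \max_{1 \le k \le n} \frac{a_k}{b_k}$, which exists since we are taking a maximum over a finite set and each $b_k > 0$ makes the quotient well-defined. By definition of $M$, for every index $k$ we have $\frac{a_k}{b_k} \le M$, and since $b_k > 0$ this can be cleared to the form $a_k \le M b_k$; note this step is valid even though the $a_k$ may be negative, because we only divide and multiply by the positive quantity $b_k$.

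Next I would multiply each of these $n$ inequalities by the corresponding positive weight $c_k > 0$, obtaining $a_k c_k \le M b_k c_k$ for $k = 1, \dots, n$, and then sum over $k$ to get $\sum_k a_k c_k \le M \sum_k b_k c_k$. Finally, since each $b_k c_k > 0$ the denominator $\sum_k b_k c_k$ is strictly positive, so dividing both sides by it preserves the inequality and yields
\[
\frac{\sum_k a_k c_k}{\sum_k b_k c_k} \le M = \max_{1 \le k \le n} \frac{a_k}{b_k},
\]
which is the claim.

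There is no real obstacle here: the only points requiring a word of care are that the hypotheses $b_k > 0$ and $c_k > 0$ are exactly what is needed to (i) form the quotients $a_k/b_k$, (ii) clear denominators without flipping inequality signs despite possibly negative $a_k$, and (iii) guarantee the global denominator $\sum_k b_k c_k$ is positive so the final division is legitimate. One could also phrase the same argument by induction on $n$ using the two-term mediant inequality $\frac{a+c}{b+d} \le \max\{\frac{a}{b}, \frac{c}{d}\}$, but the direct summation argument above is cleaner and I would use that.
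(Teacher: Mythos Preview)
Your proof is correct and complete; the argument via $M := \max_k a_k/b_k$, termwise comparison, and division by the positive sum $\sum_k b_k c_k$ is the standard one and every step is justified by the stated hypotheses. The paper does not actually prove this lemma: it merely quotes it from the reference \cite{Kuang}, so there is no in-paper proof to compare against, but your write-up would serve perfectly well as one.
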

\begin{theorem}
  Let $\mathbf{x}(t)$ be a rational B\'{e}zier curve and $\mathbf{y}(t)$ be a B\'{e}zier curve, we have
\begin{equation}\label{eq:16}
\left\| {{\mathbf{x}}(t) - {\mathbf{y}}(t)} \right\| \leqslant {\mathop {\max }\limits_i \frac{{\sum\limits_{j = \max (0,i - m)}^{\min (n,i)} {{{n \choose j}{m \choose i-j}\omega _j } } \left\|{\mathbf{p}}_j  - {\mathbf{q}}_{i - j}\right\| }}
{{\sum\limits_{j = \max (0,i - m)}^{\min (n,i)} {{n \choose j}{m \choose i-j}\omega _j } }}},\
(i=0..,m+n)
\end{equation}
\end{theorem}
\begin{proof}
\begin{eqnarray*}
    &&\left\| {{\mathbf{x}}(t) - {\mathbf{y}}(t)} \right\| \hfill\\
   =&& \left\| {\frac{{\sum\limits_{i = 0}^n {\omega _i {\mathbf{p}}_i B_i^n (t)}  - \sum\limits_{i = 0}^n {\omega _i B_i^n (t)\sum\limits_{j = 0}^m {{\mathbf{q}}_j B_j^m (t)} } }}
{{\sum\limits_{i = 0}^n {\omega _i B_i^n (t)} }}} \right\| \hfill \\
    \leqslant && {\frac{{\sum\limits_{i = 0}^{m + n} {\sum\limits_{j = \max (0,i - m)}^{\min (n,i)} {\frac{{{n \choose j}{m \choose i-j}}}
{{{m+n \choose i}}}\omega _j \left\|{\mathbf{p}}_j  - {\mathbf{q}}_{i - j} \right\|} } B_i^{m + n} (t)}}
{{\sum\limits_{i = 0}^{m + n} {\sum\limits_{j = \max (0,i - m)}^{\min (n,i)} {\frac{{{n \choose j}{m \choose i-j}}}
{{{m+n \choose i}}}\omega _j B_i^{m + n} (t)} } }}} \hfill \\
\leqslant && {\mathop {\max }\limits_i \frac{{\sum\limits_{j = \max (0,i - m)}^{\min (n,i)} {{n \choose j}{m \choose i-j}} \omega _j \left\|{\mathbf{p}}_j  - {\mathbf{q}}_{i - j} \right\|}}
{{\sum\limits_{j = \max (0,i - m)}^{\min (n,i)} {{n \choose j}{m \choose i-j}} \omega _j }}},
\end{eqnarray*}
which is completed the proof.
\end{proof}

\subsection{Further discussions}
Since $C^{(u,v)}$-continuous approximation is related to  parameterized of curves, applying the degree reduction algorithm of B\'{e}zier curves sometimes can improve approximation errors. So the  algorithm can be expressed as following:

\textbf{Algorithm 1.}

\textbf{Step 1:} Compute $\{\textbf {q}\}_{i=0}^{r}$ and $\{\textbf {q}\}_{i=m-s}^{m}$  using  (\ref{eq:8}) and (\ref{eq:9}).

\textbf{Step 2:} Compute a  Jacobi-bernstein hybride curve of degree $m'$, where $m'>m$, using equation (\ref{eq:14}).

\textbf{Step 3:} Reduce  the Jacobi-bernstein hybride curve  to degree $m$ using equation ({\ref{eq:14}}).

\textbf{Step 4:} Back  Jacobi-bernstein hybride curve to  B\'{e}zier curve using  equation (\ref{eq:5}), we obtain
               \begin{equation} \label{eq:15}
                \textbf{y}(t)=\sum\limits_{i=0}^{m}\textbf{q}_iB_i^m(t),
                \end{equation}
where $\textbf{q}_i$ were given by equations (\ref{eq:5}), (\ref{eq:8}) and (\ref{eq:9}).
\section{Numerical examples}
In this section, we present several examples of approximation of rational B\'{e}zier curves by B\'{e}zier curves with constraints, which we have described in
Section $3$. The results of  example 1 can not be obtained by the method in paper \cite{Hu}. For each example, we use the Hausdoff distance and the error bound (\ref{eq:16}) for the error function, respectively.

\textbf{Example 1.} The given curve is a rational B\'{e}zier curve of degree 2 with the control points $(0,0)$, $(1.2,1.5)$, $(1,0)$ and the associated weights $1,3,1$.  We produce 8-, 9- and 10-degree B\'{e}zier curves satisfying $C^{(3,3)}$-continuity with the given curve respectively. The resulting curves are illustrated in  Fig. 1, the corresponding error distance curves are shown in Fig. 2 and the Table 1 gives  the values under the different error measures.

\begin{figure*}[!t]
   \centering
    \includegraphics[width=2.9in]{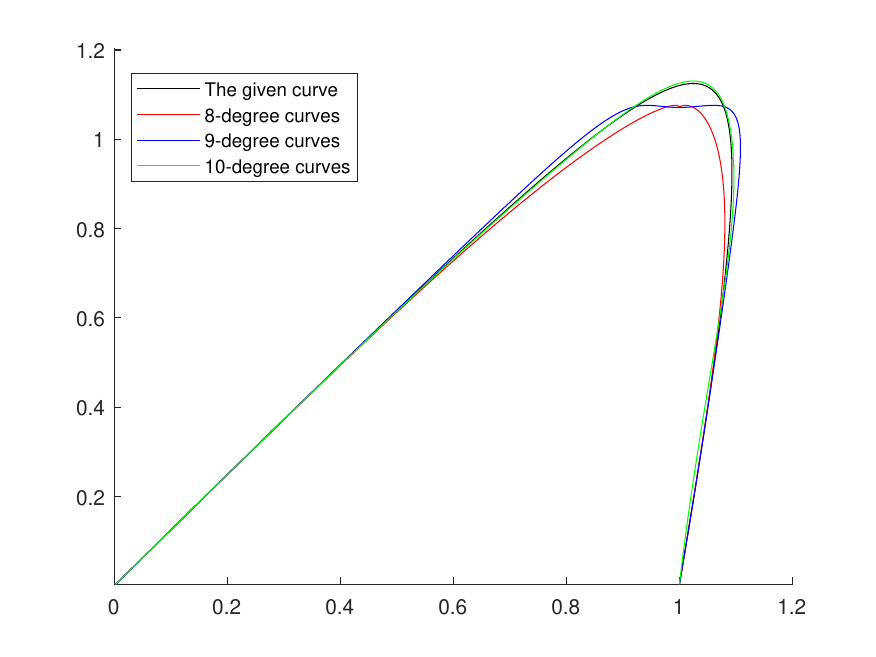}
    \caption{The rational B\'{e}zier curve of degree 2 and the resulting $C^{(3;3)}$ approximation curves of degree 8, 9 and 10. }
    \includegraphics[width=2.9in]{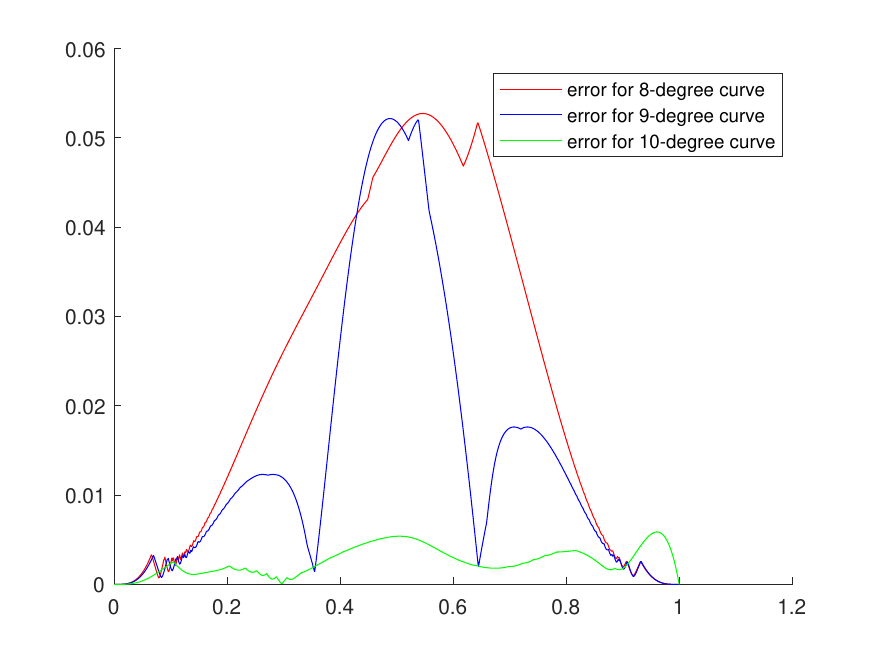}
    \caption{The corresponding Hausdoff distance  curves.}
\end{figure*}

\begin{table}[th]
\caption{Error comparison of different degrees.}
{\begin{tabular}{@{}ccccc@{}} \hline
m & Errors  & Error Bounds  \\ \cline{3-5}
& under the Hausdorff distance & $L^1$ & $L^2$ & $L^{\infty}$\\ \hline
8 &  0.052717637024975 & \hphantom{0}1.597422125210526 & 1.202750122579872& 1.089151450473684 \\
9 & 0.052149525885128 & \hphantom{0}1.440000000000000 & 1.024499877989256&  0.800000000000000 \\
10& 0.005866341723445 & \hphantom{0}1.350000000000000& \hphantom{0}0.960468635614927 &0.750000000000000 \\ \hline
\end{tabular} }
\end{table}

\textbf{Example 2.} (Also Example 3 in \cite{Lewanowicz, Hu})The given curve is a rational B\'{e}zier curve of degree 9 with the control points $(17, 12)$, $(32, 34)$, $(-23, 24)$, $(33, 62)$, $(-23, 15)$, $(25, 3)$, $(30,-2)$, $(-5,-8)$, $(-5, 15)$, $(11, 8)$ and the associated weights $1, 2, 3, 6, 4, 5,3, 4, 2, 1$. We obtained a 10-degree B\'{e}zier curve satisfying $C^{(0,0)}$-continuity with the given curve, which control points are $(17,12)$, $ (40.436, 49.663)$, $(-78.372, -0.09981)$, $(160.29, 140.35)$, $(-164.8, -112.46)$, $(149.34, 188.91)$, $(-57.221, -142.63)$, $(39.424, 89.481)$, $(9.4134, -50.075)$, $(-17.15, 19.5)$, $(11,8)$.  The resulting curves are shown in  Fig. 3, the corresponding error distance curves are illustrated in Fig. 4 and the error comparison of different approximation methods  is given in Table 2.

\begin{table}[th]
\caption{Error comparison of different degrees.}
{\begin{tabular}{@{}ccccc@{}} \hline
Methods & Errors  & Error Bounds  \\ \cline{3-5}
& under the Hausdorff distance & $L^1$ & $L^2$ & $L^{\infty}$\\ \hline
Our method &  0.241932 & \hphantom{0}67.6938 & 48.9790& 38.5468 \\
 Hu and Xu's method & 0.246726 & \hphantom{0}44.7867 & 32.7662&  28.5477 \\
Lewanowicz et al.'s method & 0.317210 & \hphantom{0}54.8403& \hphantom{0}39.7418 &32.0225 \\ \hline
\end{tabular} }
\end{table}

\begin{figure*}[!t]
   \centering
    \includegraphics[width=2.9in]{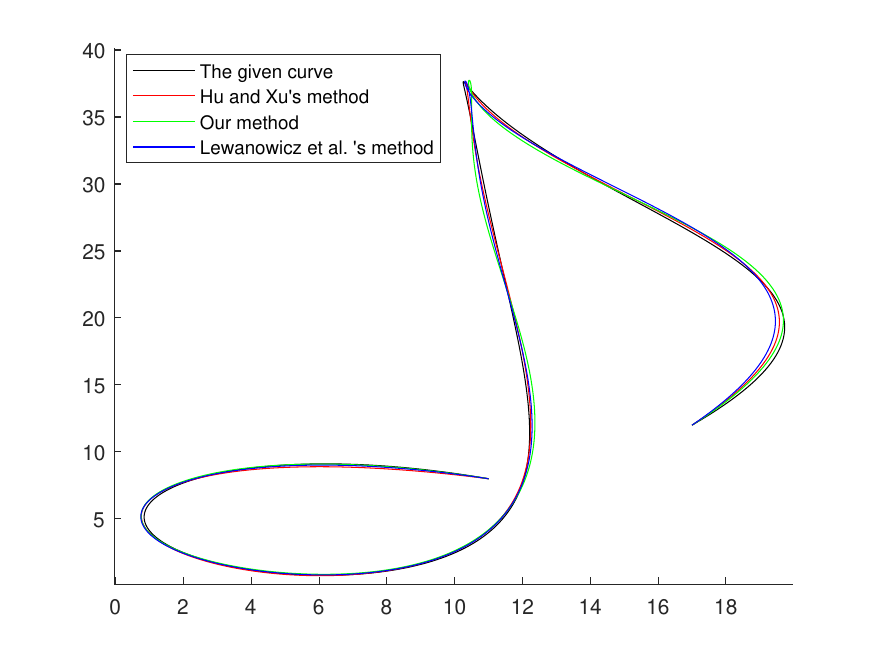}
    \caption{The rational B\'{e}zier curve of degree 9 and the resulting $C^{(0;0)}$ approximation curves of degree 10. }
    \includegraphics[width=2.9in]{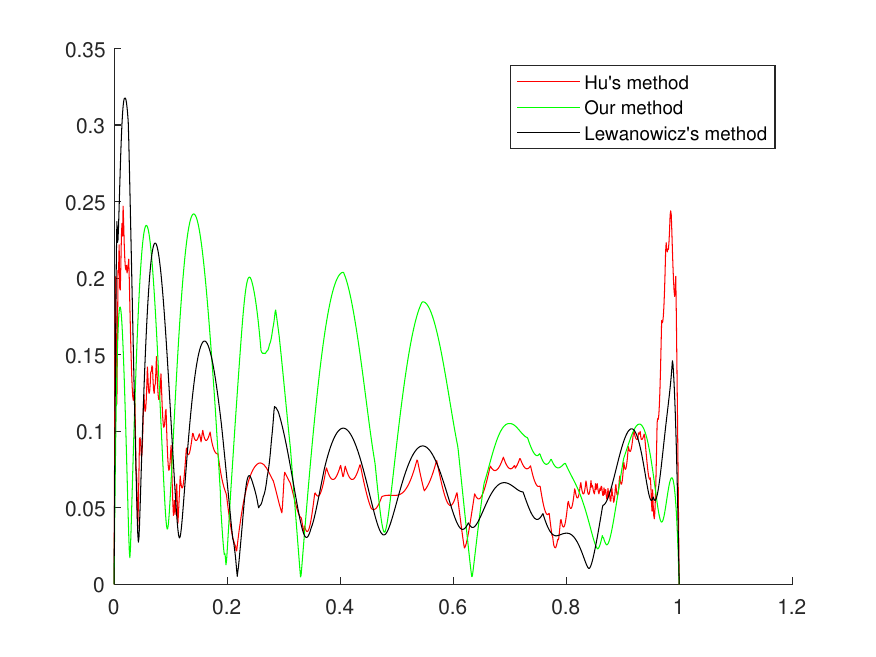}
    \caption{The error curves under the Hausdorff distance.}
\end{figure*}

\textbf{Example 3.} (Also Example 3 in \cite{Hu}) The given curve is a rational b\'{e}zier curve of degree 8 with the control points $(0, 0)$, $(0, 2)$, $(2, 10)$, $(4, 6)$, $(6, 6)$, $(11, 16)$, $(8, 1)$, $(9, 1)$, $(10, 0)$ and the associated weights $1, 2, 3, 9, 12, 20, 30,4, 1$. We apply two methods to realize $C^{(0,0)}$-continuity approximation of the 5-degree B\'{e}zier curve. one is direct approximation and the other is based on degree reduction method.  It shows in Fig. 5 and Fig. 6 that the degree reduction method is better than the direct approximation method. We also find a 5-degree B\'{e}zier curve satisfying $C^{(1,1)}$-continuity with the given curve. which approximation errors under the Huasdorff distance from our method and Hu's method are $0.4982$ and $0.560612$, respectively.

 \begin{figure*}[!t]
    \centering
    \includegraphics[width=2.9in]{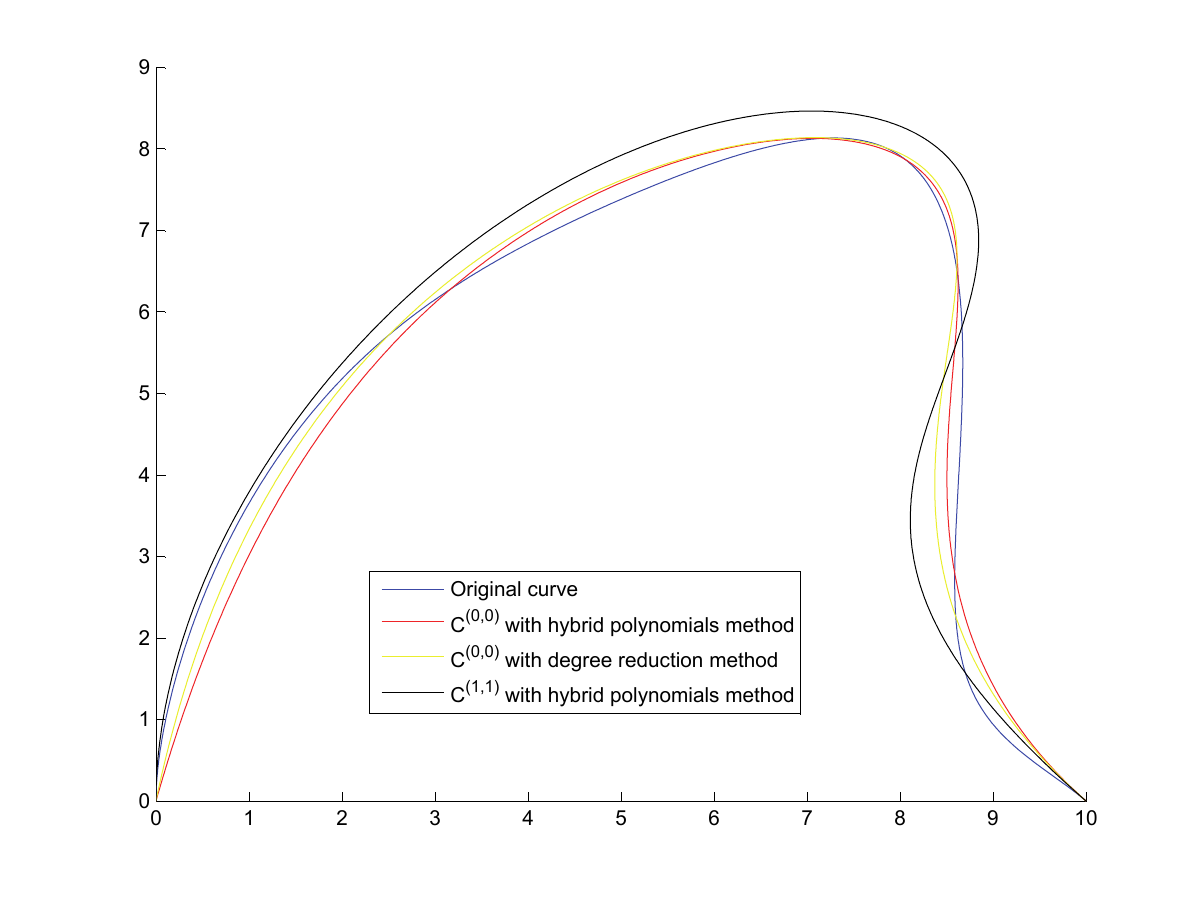}
    \caption{The rational B\'{e}zier curve of degree 8 and the resulting approximation curves of degree 5.}
    \includegraphics[width=2.9in]{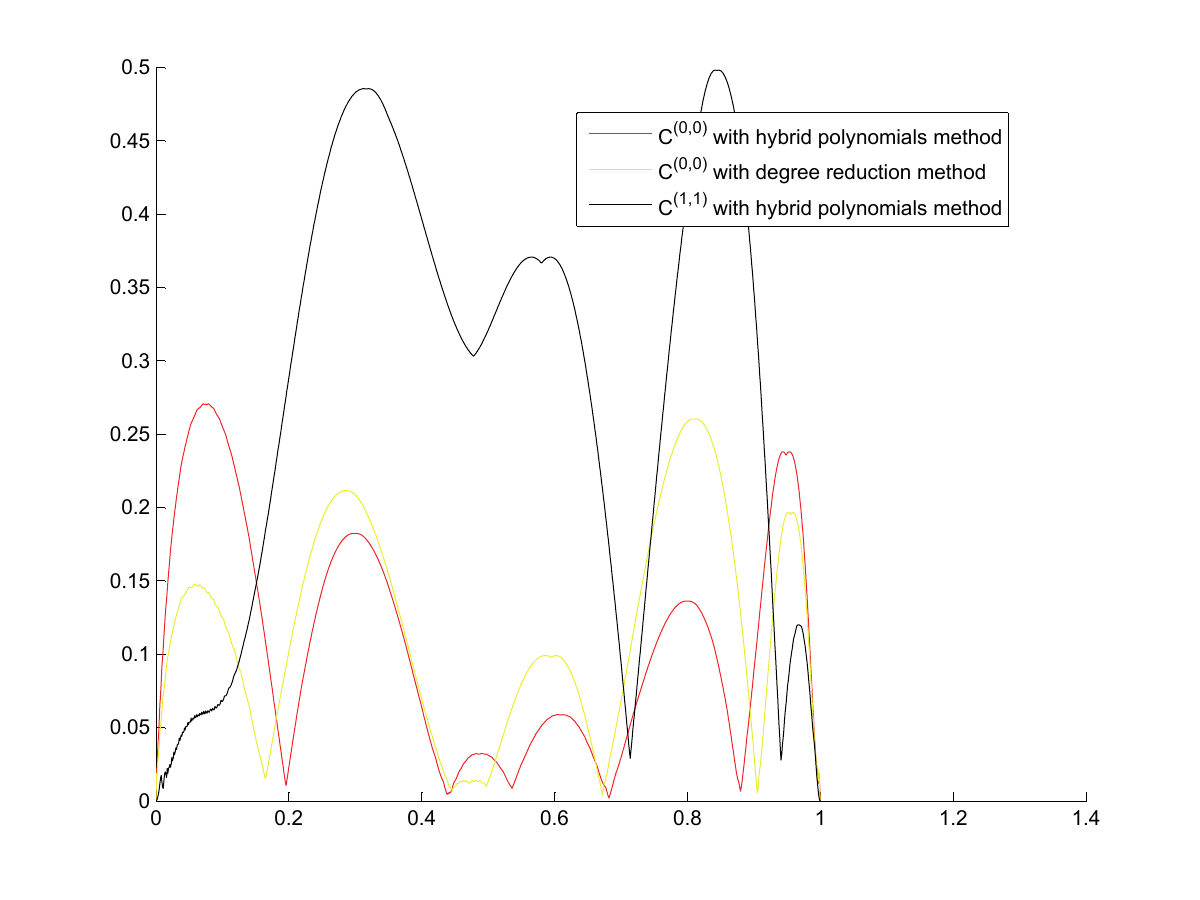}
    \caption{The error curves.  }
\end{figure*}

\section*{Acknowledgements}
\label{}
The work is  supported by the Fundamental Research Funds for the Central Universities (GK201703007)



\end{document}